\newtheorem{lemma}{Lemma}[section]
\theoremstyle{definition}
\begin{document}

\title{Adjustment with Three Continuous Variables}
\author{Brian Knaeble} 
\date{June 2017}
\maketitle
\begin{abstract}
Spurious association between $X$ and $Y$ may be due to a confounding variable $W$.  Statisticians may adjust for $W$ using a variety of techniques.  This paper presents the results of simulations conducted to assess the performance of those techniques under various, elementary, data-generating processes.  The results indicate that no technique is best overall and that specific techniques should be selected based on the particulars of the data-generating process.  Here we show how causal graphs can guide the selection or design of techniques for statistical adjustment.  R programs are provided for researchers interested in generalization.  
\end{abstract}
\section{Introduction}

Statistical adjustment has meant different things to different people over the years.  \citet{Deming64} wrote a book titled Statistical Adjustment of Data, focusing on the use of known facts to lesson errors in measurement.  Epidemiologists routinely make adjustments \citep{Kasim12}, and cancer researchers report incidence and mortality rates after adjusting for variables such as age and ethnicity \citep{Howlader16}.  Whether or not to adjust is a question relevant even to those studying workforce discrimination \citep{Schieder16}. 

The slope coefficient from simple (bivariate) linear regression may be adjusted to account for a third lurking variable \citep{Lu09}.  Multiple regression is one of many ways to make an adjustment.  Stratification over the third variable is another possibility, and there are many varieties of matching procedures, should the original regressor be dichotomous.  When all three variables are continuous adjustment can be accomplished using partial and semi-partial correlation.  

In some instances conditioning on a third variable can increase bias \citep{Pearl09b}.  Given an accurate causal graph (also known as a Directed Acyclic Graph or DAG) the back-door criterion can be used to determine an admissible set of covariates on which to condition for unbiased estimation of a causal effect \citep{Pearl09a}.  Some argue for conditioning on as many pre treatment variables as possible \citep{Rubin09}, but this is not without controversy.  See the introduction of \citet{Ding15} for an overview.  

Here we have conducted numerous simulations, restricting our attention to the simplest case of only three variables.  We assume linearity and continuity, and we study 33 different data generating processes to assess the performance of multiple regression compared with simple regression and methods relating to partial correlation.  For detailed description of assumptions and studied adjustment techniques see Section 2.  For graphical displays of results see Section 3.  Commentary on the results is provided in Section 4.  For further discussion and conclusions see Section 5.
\section{Methods}  
There are 25 non-cyclic graphs of three variables upon which to generate data.  The classic confounding case is shown here: 
\newline
 \begin{tikzpicture}
\node (x) at (-1,1) {$X$};
\node (y) at (1,1) {$Y$};
\node (w) at (0,0) {$W$};
 
\draw[->, line width= 1] (x) -- (y);
\draw[->, line width= 1] (w) -- (x);
\draw[->, line width= 1] (w) -- (y);
\end{tikzpicture}  
Each of the three arrows can be independently modified so as to vanish or reverse direction.  For example, $W$ could be an intermediate variable as shown here:
\newline
 \begin{tikzpicture}
\node (x) at (-1,1) {$X$};
\node (y) at (1,1) {$Y$};
\node (w) at (0,0) {$W$};
 
\draw[->, line width= 1] (x) -- (w);
\draw[->, line width= 1] (w) -- (y);
\end{tikzpicture} 
Most would agree given the latter causal graph that conditioning on $W$ is inappropriate, because a change in $X$ can not produce a change in $Y$ when $W$ is held fixed.  

We write $X\rightarrow Y$ to represent a direct causal effect of $X$ on $Y$.  We use a dash to represent the absence of a direct causal effect, e.g. $X-Y$.  We write $X\leftarrow Y$ when $Y$ causes $X$.  In this latter case the true causal effect of $X$ on $Y$ is zero.  Analogous notation and interpretation holds for relationships with $W$.  $X$ and $Y$ are defined or redefined if necessary so that any causal effect between them is positive not negative.  When there are two causal effects involving $W$, we default to them both being positive, and we allow for the effect between $W$ and $X$ to be positive or negative.  We signal a negative effect with a lower case n, e.g. $X\leftarrow^nW$.  We use the adjective ``twisted'' when the direction of the effect between $W$ and $X$ differs from the direction of the effect between $W$ and $Y$.  We do not allow cycles.  

For consistency, data is generated assuming normality and unit variance, and associated with each arrow is a true slope coefficient of magnitude $1/\sqrt{3}$.  To maintain unit variance for a response variable, defined as any variable at the tip of an arrow, the error term has standard deviation $\sqrt{2}/\sqrt{3}$ when there is one cause, and $1/\sqrt{3}$ for each when there are two causes.  The sample size can be set arbitrarily, and here we have chosen to set the sample size at $n=30$.  

The following R commands,
\begin{verbatim}
x=rnorm(30,0,1)
y=(1/sqrt(3))*x+rnorm(30,0,sqrt(2)/sqrt(3))
w=(1/sqrt(3))*x+(1/sqrt(3))*y+rnorm(30,0,1/sqrt(3))
\end{verbatim}  
generate data for this graph:
\newline
 \begin{tikzpicture}
\node (x) at (-1,1) {$X$};
\node (y) at (1,1) {$Y$};
\node (w) at (0,0) {$W$};
 
\draw[->, line width= 1] (x) -- (y);
\draw[->, line width= 1] (x) -- (w);
\draw[->, line width= 1] (y) -- (w);
\end{tikzpicture}  
In general, following our assumptions, each graph has its own data generating process.

We assume that researchers, blind to the data generating process, will attempt to estimate the causal effect of $X$ on $Y$ by 1. regressing $Y$ on $X$ (Simple Regression), 2. regressing $Y$ on $X$ and $W$ (Multiple Regression), 3. regressing $Y$ on the residuals from a regression of $X$ on $W$ (Residual $X$), 4.  regressing the residuals from a regression of $Y$ on $W$, onto $X$ (Residual $Y$), 5.  regressing the residuals from a regression of $Y$ on $W$, onto the residuals from a regression of $X$ on $W$ (Residual $X$ and $Y$), or 6.  regressing $Y$ onto the fitted values of a regression of $X$ onto $W$ (Fitted $X$).  We show in the appendix that Multiple Regression, Residual $X$, and Residual $X$ and $Y$ are equivalent, each having conditioned on $W$.  The results thus display estimates from only Simple Regression, Multiple Regression, Residual $Y$, and Fitted $X$.  

\section{Results}
For each data generating process, one thousand datasets were produced, and the four adjustment techniques applied to each.  The true, total causal effect is defined from the data generating process, using the chain rule to multiply slope coefficients of a unidirectional causal chain, and adding effects across paths when appropriate.  The true, total causal effect is represented graphically with a vertical black line.  Figures are classified by how $W$ relates to both $X$ and $Y$ within the data generating process, while within-figure-plots are classified by how $X$ relates to $Y$ within the data generating process.

\begin{figure}[p]
\centering
\includegraphics[width=5in]{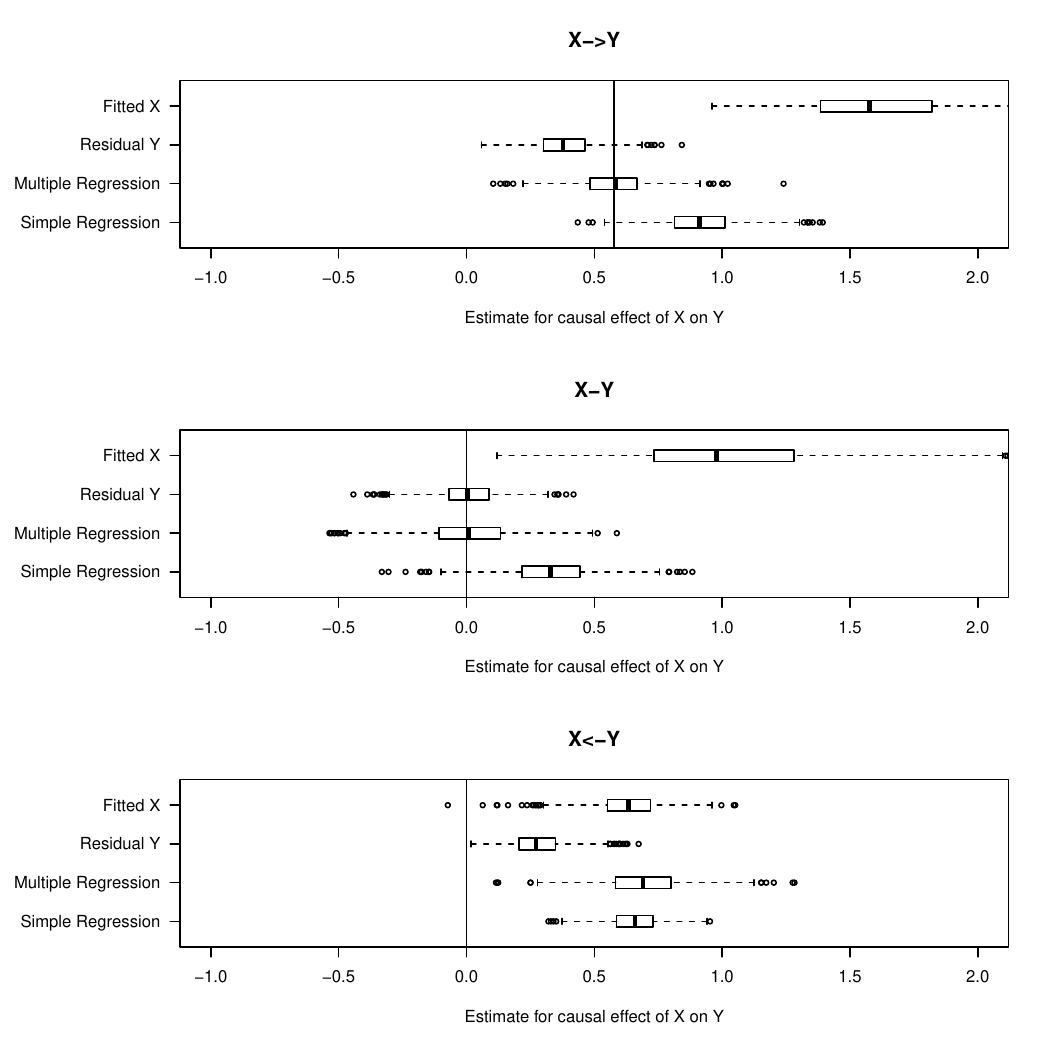}
\caption{Confounding: $X\leftarrow W\rightarrow Y$.}
\label{Conf}
\end{figure}

\begin{figure}[p]
\centering
\includegraphics[width=5in]{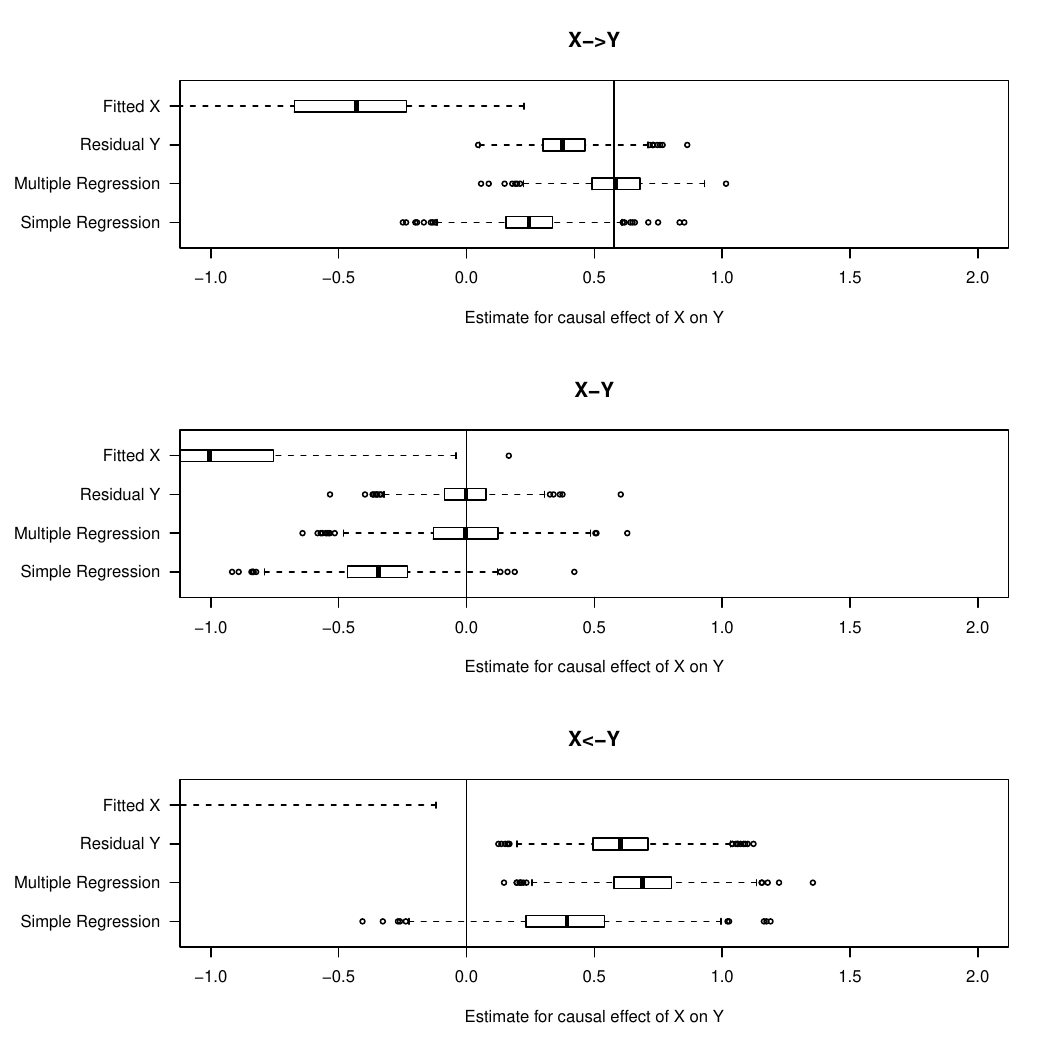}
\caption{Twisted Confounding: $X\leftarrow^n W\rightarrow Y$.}
\label{TConf}
\end{figure}

\begin{figure}[p]
\centering
\includegraphics[width=5in]{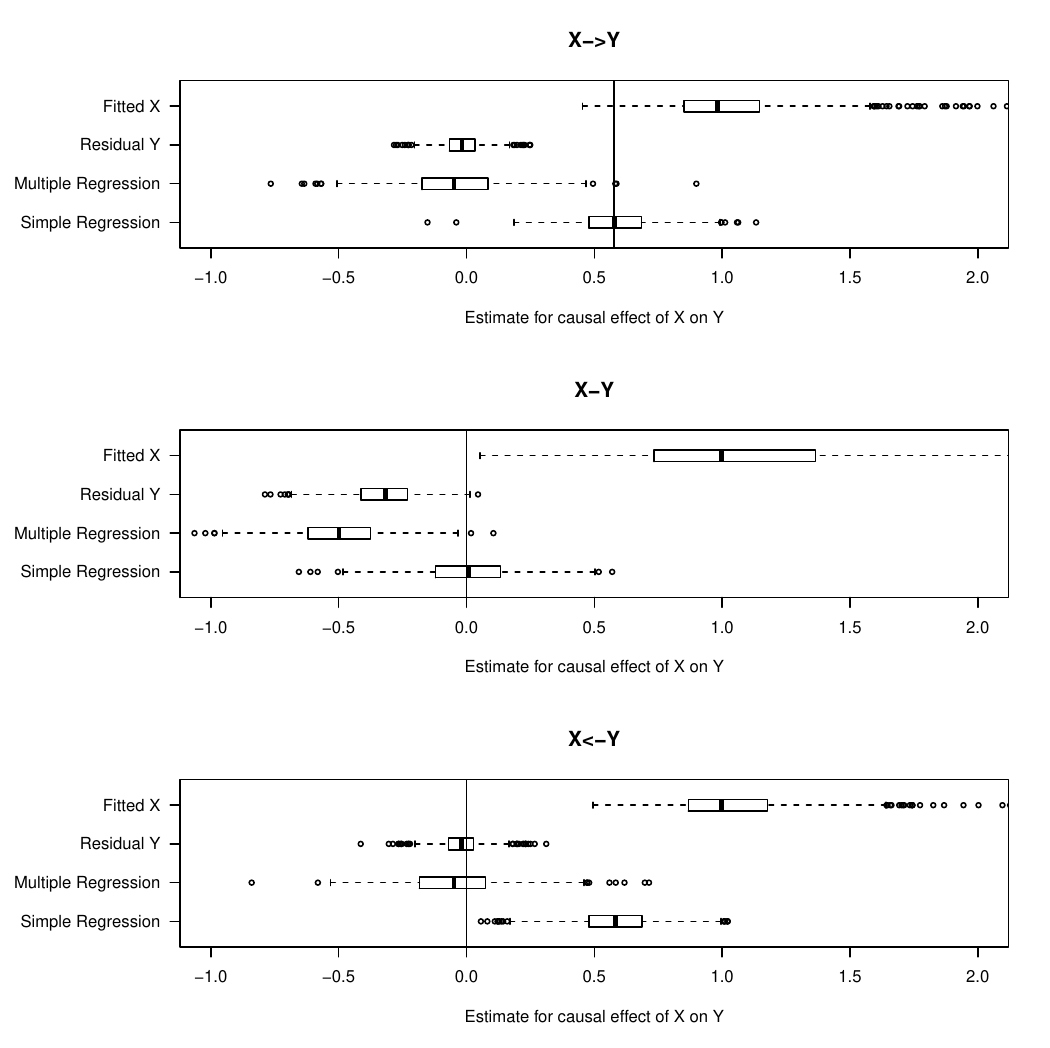}
\caption{$W$ as a collider: $X\rightarrow W\leftarrow Y$.}
\label{Col}
\end{figure}

\begin{figure}[p]
\centering
\includegraphics[width=5in]{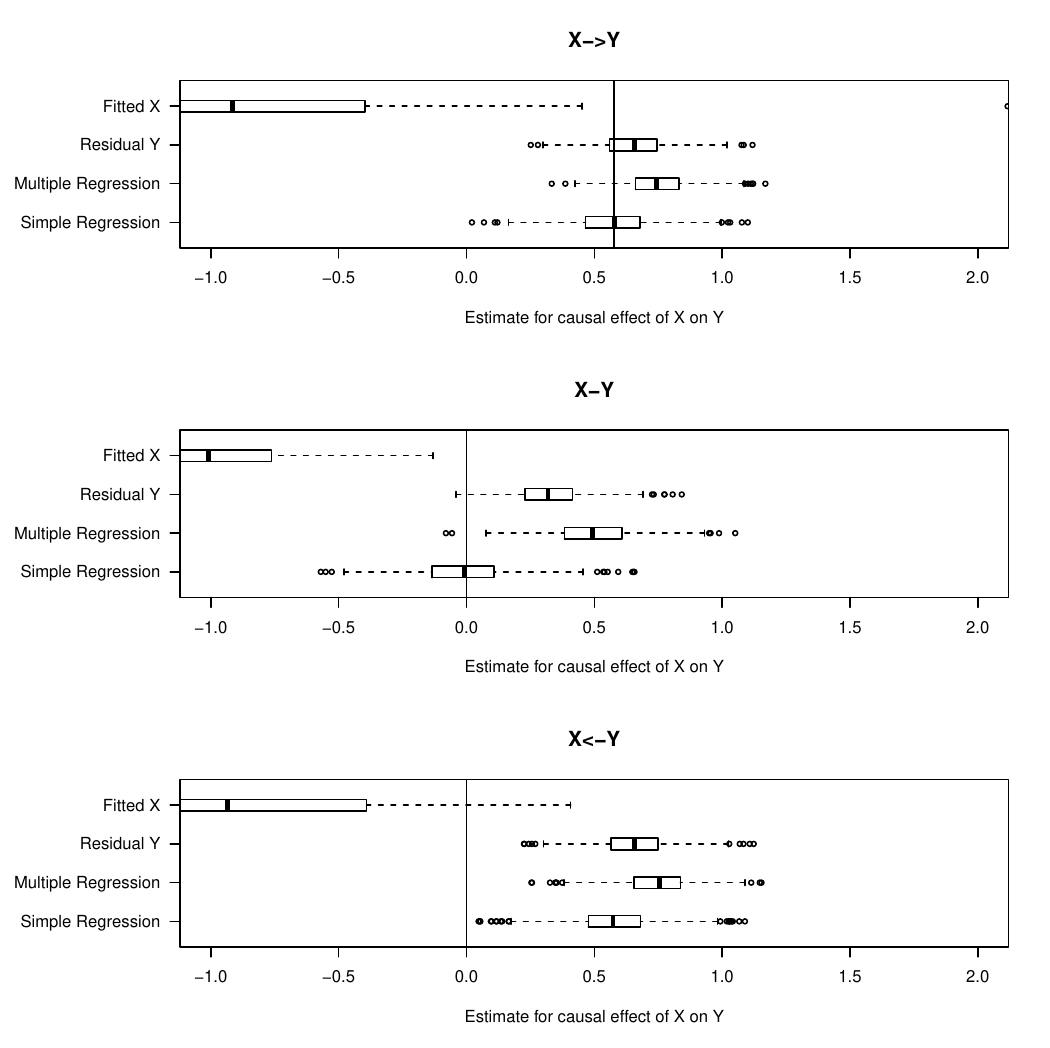}
\caption{$W$ as a twisted collider: $X\rightarrow^n W\leftarrow Y$.}
\label{TCol}
\end{figure}

\begin{figure}[p]
\centering
\includegraphics[width=5in]{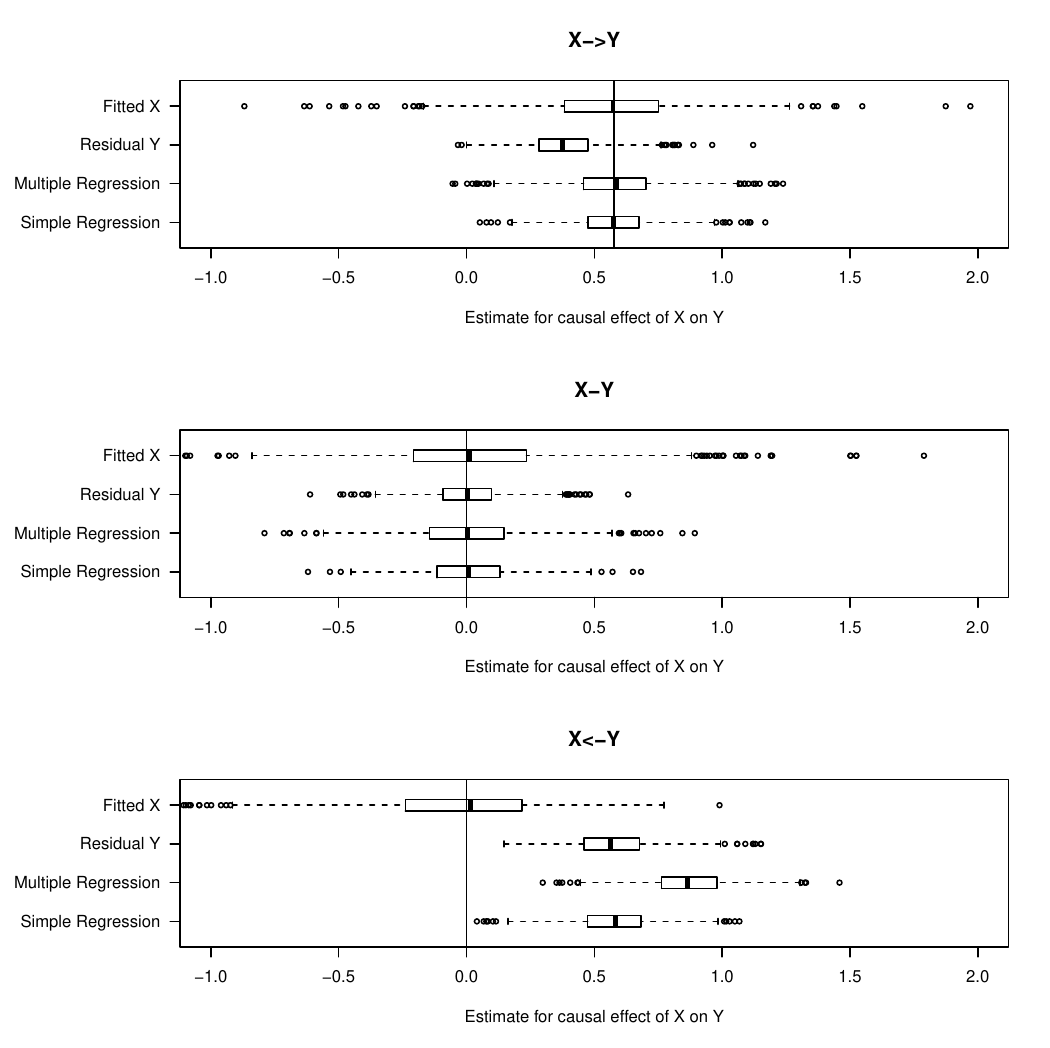}
\caption{Instrumental W: $X\leftarrow W-Y$.}
\label{Inst}
\end{figure}

\begin{figure}[p]
\centering
\includegraphics[width=5in]{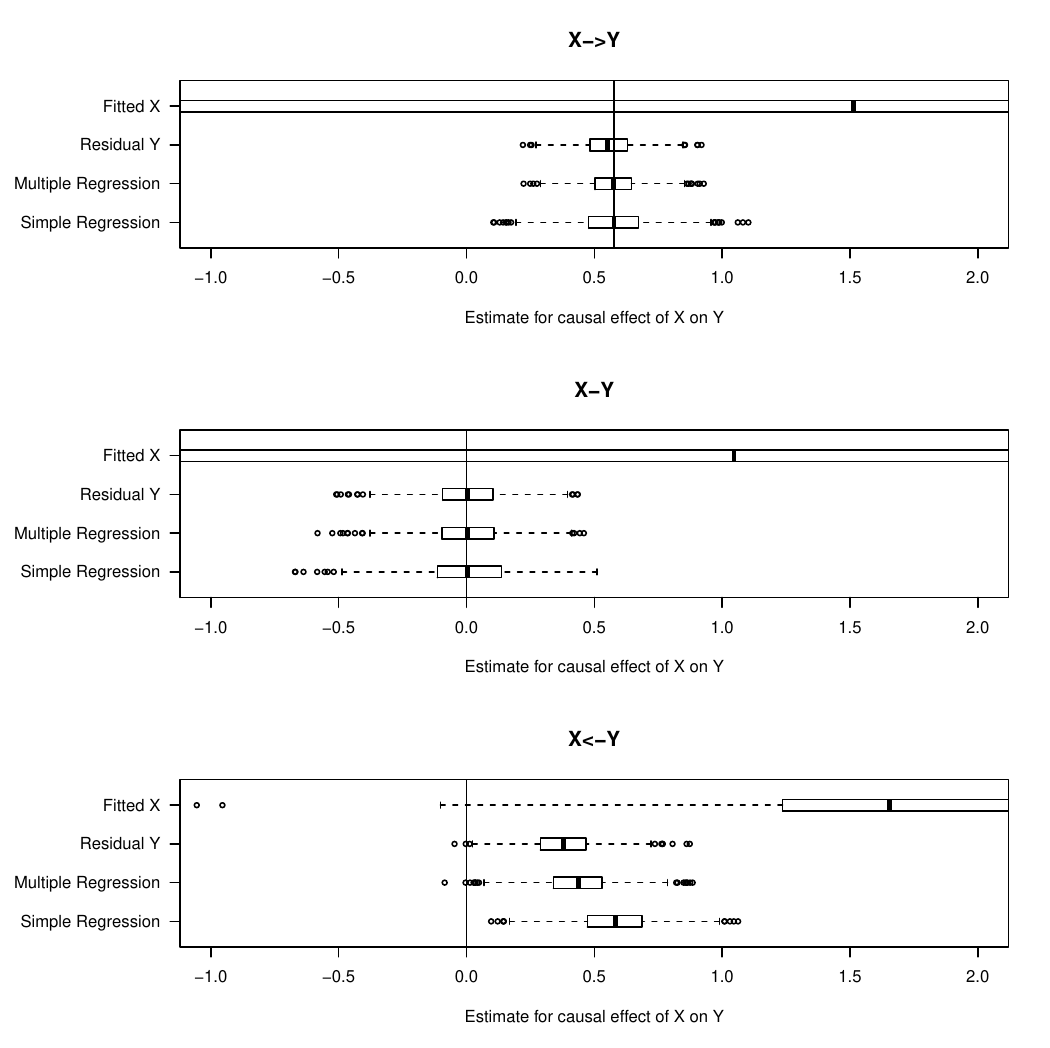}
\caption{Weak Confounding: $X-W\rightarrow Y$.}
\label{WConf}
\end{figure}

\begin{figure}[p]
\centering
\includegraphics[width=5in]{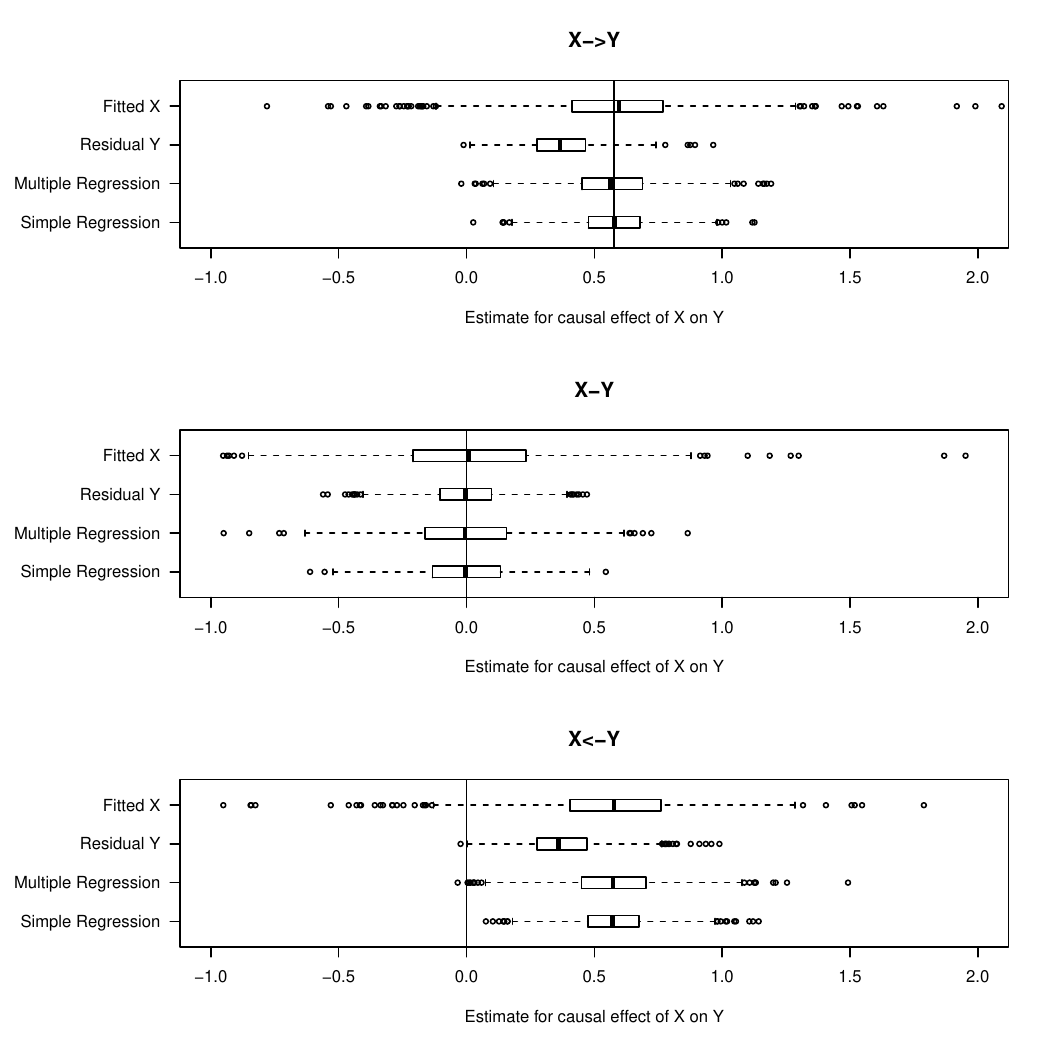}
\caption{$W$ Post Treatment: $X\rightarrow W-Y$.}
\label{PostT}
\end{figure}

\begin{figure}[p]
\centering
\includegraphics[width=5in]{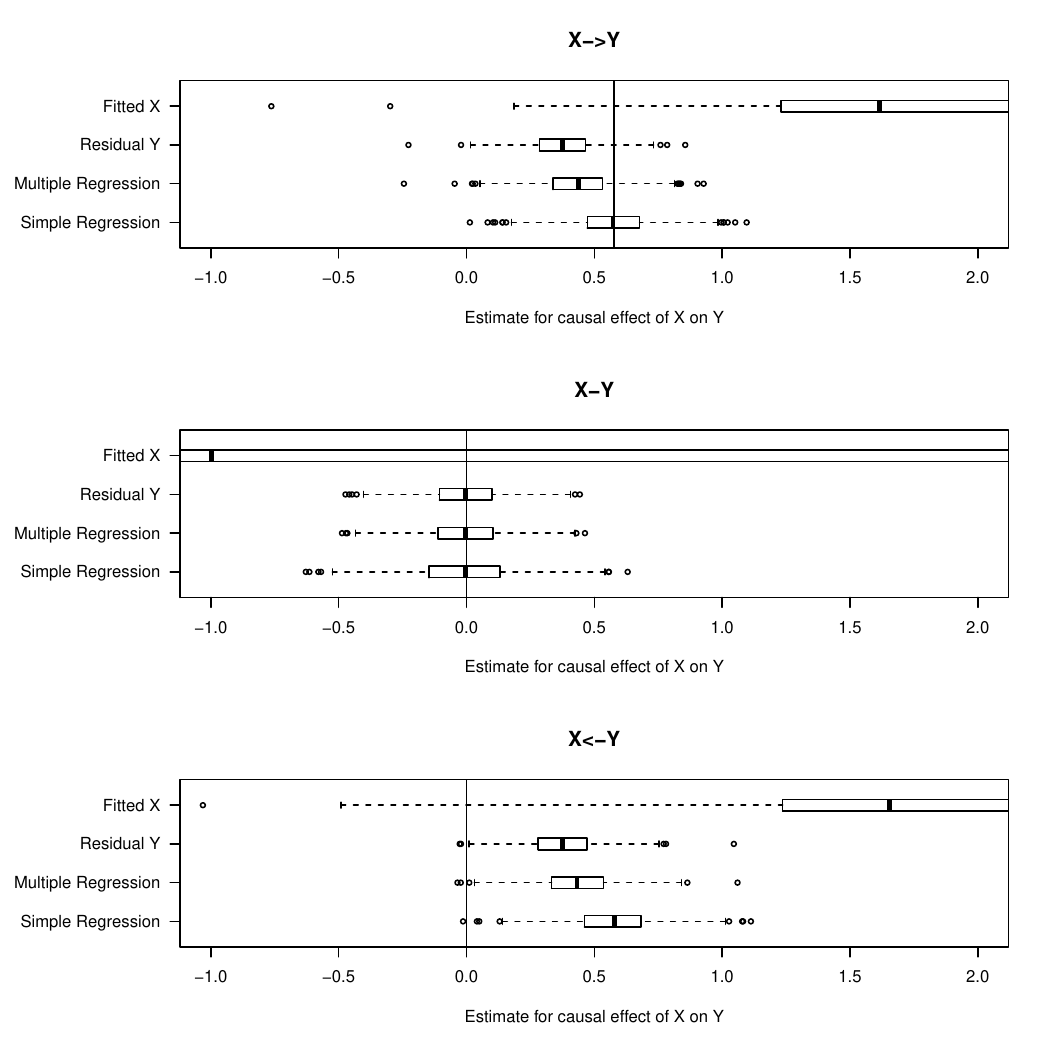}
\caption{$W$ Post Response: $X-W\leftarrow Y$.}
\label{PostR}
\end{figure}

\begin{figure}[p]
\centering
\includegraphics[width=5in]{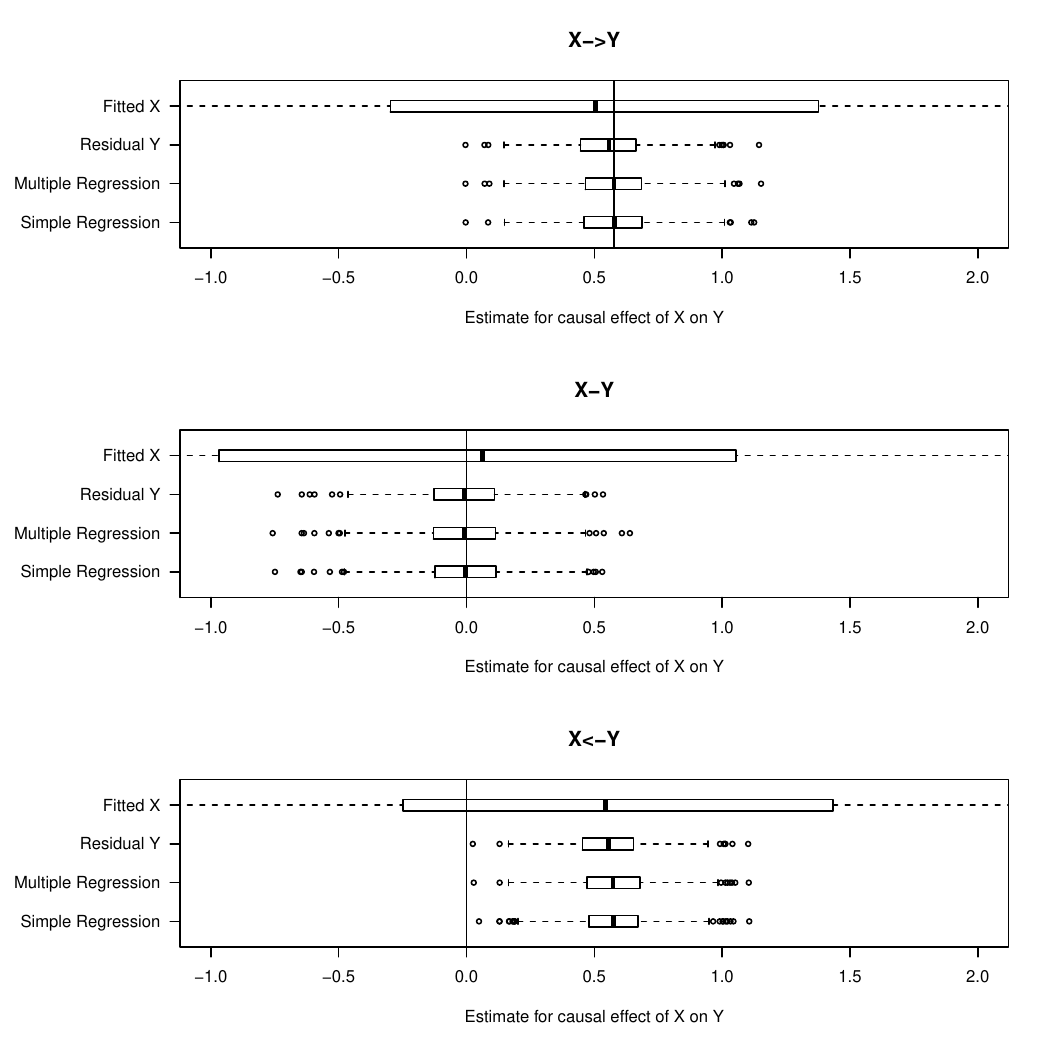}
\caption{$W$ Irrelevant: $X-W-Y$.}
\label{Irr}
\end{figure}

\begin{figure}[p]
\centering
\includegraphics[width=5in]{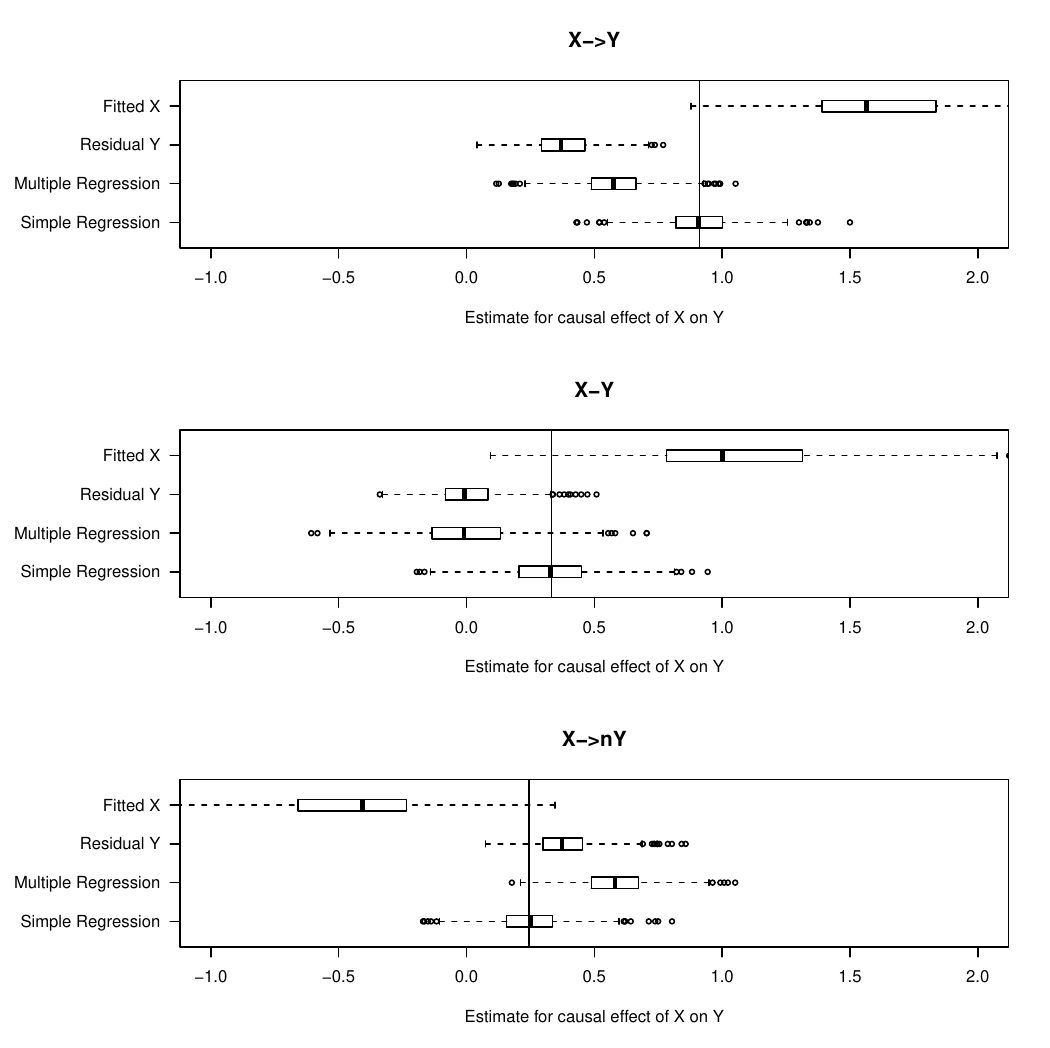}
\caption{Forward Chain: $X\rightarrow W\rightarrow Y$; cycle excluded, replaced by twisted forward chain.}
\label{FChain}
\end{figure}

\begin{figure}[p]
\centering
\includegraphics[width=5in]{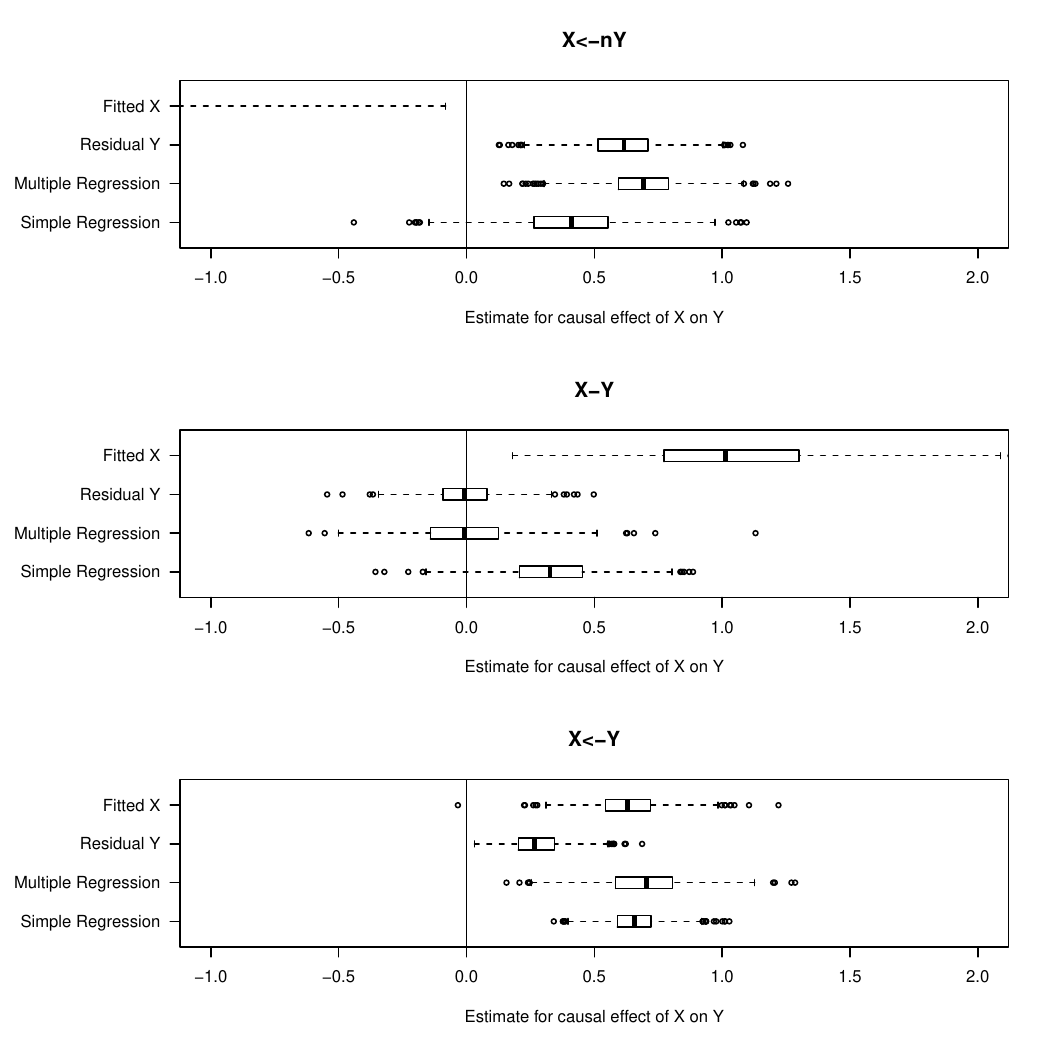}
\caption{Backward Chain: $X\leftarrow W\leftarrow Y$; cycle excluded, replaced by twisted backward chain.}
\label{BChain}
\end{figure}
\FloatBarrier

\newpage
\section{Comments}
With an unobserved confounder $U$ as shown here
\newline
 \begin{tikzpicture}
\node (x) at (0,1) {$X$};
\node (y) at (1,1) {$Y$};
\node (w) at (0,0) {$W$};
\node (u) at (1,0) {$U$};
 
\draw[->, line width= 1] (x) -- (y);
\draw[->, line width= 1] (w) -- (x);
\draw[->, line width= 1] (u) -- (x);
\draw[->, line width= 1] (u) -- (y);
\end{tikzpicture}  
we would expect Fitted $X$ to outperform Simple Regression \citep[Chapter 7]{Morgan07}.  Fitted $X$ matches the causal structure of Figure \ref{Inst}, where it is the only consistently unbiased technique.  Fitted $X$ performs poorly otherwise.  

Residual $Y$ is surprisingly robust against a mis specified direction for the causal effect between $X$ and $Y$.  Take Figure \ref{Conf} as an example.  There we see for each plot that the bias of Residual $Y$ is bounded by approximately $0.3$.  This is a better bound than any of the alternatives.  When estimating the causal effect of $X$ on $Y$ a spurious result is possible when $Y$ causes $X$ in reality.  The risk for such a spurious result can be reduced in some situations by using Residual $Y$. 

Another observed benefit of Residual $Y$ is its generally small variance.  With a large sample and a known causal graph, theory says \citep{Pearl09b,Vand} when $W\rightarrow X$ then Multiple Regression should be used in place of Simple Regression and when $X\rightarrow W$ then Simple Regression should be used instead of Multiple Regression, and this is consistent with our results.  Our results show the utility of additional adjustment techniques when samples are smaller or when there is uncertainty regarding the structure of the causal graph.   
\section{Discussion}
We have displayed results for sample size $n=30$ using box plots so as to show bias and sampling error.  Those interested in large sample asymptotic theory can focus on the medians of the box plots, while those interested in small sample application can focus on the width or span of the box plots.  We have assumed normality and linearity throughout.  In some cases real data may meet these assumptions, and in other cases the data may be transformed so as to meet these assumptions, but often data will not satisfy these assumptions.  Also, the data generating process may involve more than three variables, some or all possibly categorical.  The R code used for our simulations should be adapted and generalized before application to a specific analysis.  Our simple results are meant to guide intuition and spur new thought.

It is hard to believe that any proposed causal graph perfectly matches reality.  Given a set of candidate graphs, we can conduct simulations to assess the performance of various adjustment techniques across data generating processes associated with the candidate graphs, before selecting an adjustment technique in accordance with costs associated with our specific problem.  If costs rise dramatically only for very large errors then the conservative choice of Residual Y adjustment may be preferable to Multiple Regression adjustment, especially when reverse causality is possible.  On the other hand, when there is a high degree of confidence in a causal graph then the simulations may reveal a single best adjustment techniques that matches theoretical recommendations for very precise estimation.

\citet[p. 123]{Pearl09a} states that ``adjustment amounts to partitioning the population into groups that are homogeneous with respect to (the third variable)'', while \citet{Ding15} titled their paper ``To Adjust or Not  to Adjust?''.  Much concern \citep{Vand} has been given to the problem of determining which covariates to condition on.  \citet{Rubin09} argues against leaving observed covariates unbalanced.  It has been suggested that ignoring covariate data is contrary to Bayesianism \citep{GellmanBlog}.  \citet{Arm} shows how to use such covariate information.  Here we have simply presented the results of basic simulations in support of the notion that we should ask not only whether we should adjust or not, but also how we should best adjust.

\section{Appendix}
\begin{lemma}
The estimates for ``Residual $X$'' and ``Residual $X$ and $Y$'' are both identical to the estimate for ``Multiple Regression''.
\end{lemma}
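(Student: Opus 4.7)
The plan is to prove the three estimators coincide by invoking (and briefly reproving) the Frisch--Waugh--Lovell theorem. First I would fix notation: stack the observations into vectors $X, Y, W \in \mathbb{R}^n$, and let $Z = [\mathbf{1}, W]$ be the $n \times 2$ design matrix for the nuisance regressors (constant included, as all three R regressions use an intercept). Let $M = I - Z(Z^T Z)^{-1} Z^T$ denote the projection onto the orthogonal complement of the column space of $Z$. Then $M$ is symmetric and idempotent, and by construction $\tilde X := MX$ and $\tilde Y := MY$ are precisely the residual vectors used in the definitions of ``Residual $X$'' and ``Residual $X$ and $Y$''.

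Second, I would show the two residual-based estimators coincide with each other purely algebraically. The ``Residual $X$ and $Y$'' slope is
\[
\hat\beta_{XY} = (\tilde X^T \tilde X)^{-1} \tilde X^T \tilde Y = (X^T M^T M X)^{-1} X^T M^T M Y = (X^T M X)^{-1} X^T M Y,
\]
where the last equality uses symmetry and idempotence. The ``Residual $X$'' slope is $\hat\beta_X = (\tilde X^T \tilde X)^{-1} \tilde X^T Y = (X^T M X)^{-1} X^T M Y$, since $\tilde X^T Y = X^T M Y$. Thus $\hat\beta_X = \hat\beta_{XY}$ with no work beyond idempotence of $M$.

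Third, I would show the multiple-regression coefficient on $X$ equals the same expression. Partitioning the full design as $[Z, X]$, the OLS normal equations split into two blocks: $Z^T Z \hat\gamma + Z^T X \hat\beta = Z^T Y$ and $X^T Z \hat\gamma + X^T X \hat\beta = X^T Y$. Solving the first block for $\hat\gamma = (Z^T Z)^{-1}(Z^T Y - Z^T X \hat\beta)$ and substituting into the second collapses, after using $I - Z(Z^T Z)^{-1}Z^T = M$, to $X^T M X \hat\beta = X^T M Y$, giving $\hat\beta_{\mathrm{MR}} = (X^T M X)^{-1} X^T M Y$, matching the common value above.

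There is no real obstacle here; the result is a direct specialization of Frisch--Waugh--Lovell. The only small point requiring care is remembering that all four regressions implicitly include an intercept, so the nuisance block must be $[\mathbf{1}, W]$ rather than $W$ alone; with that in place every step is a short linear-algebra identity.
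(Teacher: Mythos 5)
Your proof is correct, but it runs along a different (if closely related) track from the paper's. The paper centers all three data vectors, works coordinate-free in the subspace orthogonal to $\mathbf{1}$, and argues geometrically: the fitted vector from Multiple Regression is the projection $\mathbf{p}_1$ of $\mathbf{y}$ onto $\mathrm{span}\{\mathbf{x},\mathbf{w}\}$, this span is unchanged when $\mathbf{x}$ is replaced by its residual $\mathbf{x}-\mathbf{p}_2$ (since $\mathbf{p}_2$ is a multiple of $\mathbf{w}$), and likewise subtracting $\mathbf{p}_3=k_3\mathbf{w}$ from $\mathbf{y}$ only shifts the fitted vector by a multiple of $\mathbf{w}$; matching coefficients in the resulting identities forces the three slope estimates to agree, and as a bonus exhibits exactly how the nuisance coefficients shift ($c_2=c_1+\alpha_1k_2$, $c_3=c_2+k_3$). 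You instead keep the intercept explicit in the nuisance block $Z=[\mathbf{1},W]$, introduce the annihilator $M=I-Z(Z^TZ)^{-1}Z^T$, and derive the single closed form $(X^TMX)^{-1}X^TMY$ for all three estimators via symmetry/idempotence of $M$ and the partitioned normal equations --- the standard algebraic Frisch--Waugh--Lovell derivation. Both are complete; yours buys explicit formulas and a uniform treatment of the intercept, while the paper's makes the geometric content (identical column spaces, hence identical fitted vectors) transparent. One small step you leave implicit: for ``Residual $X$'' the regression of $Y$ on $\tilde X$ includes an intercept, and your slope formula $(\tilde X^T\tilde X)^{-1}\tilde X^TY$ is the OLS slope of that regression only because $\tilde X=MX$ is orthogonal to $\mathbf{1}$, which holds precisely because you put $\mathbf{1}$ into $Z$; it would be worth one sentence to say so.
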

\begin{proof}
Let $\bf{x}$,$\bf{w}$, and $\bf{y}$ be observed vectors of data associated with $n$ observations of $(X,W,Y)$.  We assume mean zero vectors of data without loss of generality so $\bf{x}$,$\bf{w}$, and $\bf{y}$ are within a linear, $(n-1)$-dimensional, subspace of $\mathbb{R}^n$ orthogonal to a vector of ones.  All that follows takes place within this subspace.  We assume also that $\mathbf{w}\neq k_1\mathbf{x}$ for any $k_1$. 

Let $\mathbf{p}_1$ be the projection of $\mathbf{y}$ onto the span of $\mathbf{x}$ and $\mathbf{w}$.   Let $\mathbf{p}_2$ be the projection of $\mathbf{x}$ onto $\mathbf{w}$, and note that $\mathbf{p}_2=k_2\mathbf{w}$ for some $k_2$.  Let $\mathbf{p}_3$ be the projection of $\mathbf{y}$ onto $\mathbf{w}$, and note that $\mathbf{p}_3=k_3\mathbf{w}$ for some $k_3$.

Let $\alpha_1$ be the Multiple Regression slope estimate associated with $X$.  By least-squares we have $\alpha_1\mathbf{x}+c_1\mathbf{w}=\mathbf{p}_1$ for some $c_1$.  Let $\mathbf{p}_4$ be the projection of $\mathbf{y}$ onto the span of $\mathbf{x}-\mathbf{p}_2$ and $\mathbf{w}$.  Note that $\mathbf{p}_4=\mathbf{p}_1$. 

Let $\alpha_2$ be the Residual $X$ slope estimate.  We have $\alpha_2(\mathbf{x}-\mathbf{p}_2))+c_2\mathbf{w}=\mathbf{p}_4$, which can be solved with $\alpha_2=\alpha_1$ and $c_2=c_1+\alpha_1k_2$.

Let $\alpha_3$ be the Residual $X$ and $Y$ slope estimate.  Note that $\mathbf{p}_1-\mathbf{p}_3$ is the projection of $\mathbf{y}-\mathbf{p}_3$ onto the span of $\mathbf{x}-\mathbf{p}_2$ and $\mathbf{w}$.  We have $\alpha_3(\mathbf{x}-\mathbf{p}_2))+c_3\mathbf{w}=\mathbf{p}_1-\mathbf{p}_3$, which (since $\mathbf{p}_3=k_3\mathbf{w}$) can be solved with $\alpha_3=\alpha_2$ and $c_3=c_2+k_3$.

\end{proof}
\begin{verbatim}
### R Program for simulations 
k=1000 ### number of samples
v1=numeric(k) 
v2=numeric(k)
v3=numeric(k)
v4=numeric(k) 
for (i in 1:k) {
### insert particular data generating process below
x=rnorm(30,0,1)
w=(1/sqrt(3))*x+rnorm(30,0,sqrt(2)/sqrt(3))
y=(1/sqrt(3))*w+(1/sqrt(3))*x+rnorm(30,0,1/sqrt(3))
### the above example is for a forward chain
### the true value is (1/sqrt(3))*(1/sqrt(3))+(1/sqrt(3))
f=lm(x~w)$fitted.values
g=lm(y~w)$residuals
v1[i]=summary(lm(y~x))$coefficients[2,1] 
v2[i]=summary(lm(y~x+w))$coefficients[2,1] 
v3[i]=summary(lm(g~x))$coefficients[2,1]
v4[i]=summary(lm(y~f))$coefficients[2,1]
}
### R program for graphing box plots
par(las=1)
par(mar=c(5, 9, 4, 2) + 0.1)
boxplot(v1,v2,v3,v4,
xlab=c("Estimate for causal effect of X on Y"),
horizontal=T,
names=c(v1="Simple Regression",v2="Multiple Regression",
v3="Residual Y",v4="Fitted X"),
boxwex=.3)
abline(v=(1/sqrt(3))*(1/sqrt(3))+(1/sqrt(3)))
\end{verbatim}


\begin{thebibliography}{99}

\bibitem[Armistead(2014)]{Arm} Armistead, T.  2014.  Resurrecting the Third Variable: A Critique of Pearl's Causal Analysis of Simpson's Paradox.  \textit{The American Statistician}.  Volume 68, Issue 1. 

\bibitem[Carroll and Huff(1980)]{Caroll80}  Carroll, K. and Huff M. 1980. \textit{Nutrition and food science, present knowledge and utilization Vol. 3}.
Plenum Press.

\bibitem[Demming(1964)]{Deming64}  Deming, E. 1964. \textit{Statistical Adjustment of Data}. Dover Publications (November 17, 2011).

\bibitem[Ding and Miratrix(2015)]{Ding15} Ding, P. and Miratrix, L. 2015.  To Adjust or Not to Adjust?  Sensitivity Analysis of M-Bias and Butterfly-Bias. \textit{Journal of Causal Inference}.

\bibitem[Gellman(2009)]{GellmanBlog} Gellman, A.  2009.  Resolving disputes between J. Pearl and D. Rubin on causal inference.  \textit{andrewgelman.com}.  Posted by Andrew on 5 July 2009, 12:32pm.

\bibitem[Howlader et al.(2016)]{Howlader16} Howlader N, Noone AM, Krapcho M, Miller D, Bishop K, Kosary CL, Yu M, Ruhl J, Tatalovich Z, Mariotto A, Lewis DR, Chen HS, Feuer EJ, Cronin KA (eds). 2016. \textit{SEER Cancer Statistics Review, 1975-2014}. National Cancer Institute, Bethesda, MD, based on November 2016 SEER data submission, posted to the SEER web site, April 2017.

\bibitem[Kasim(2012)]{Kasim12}  Kasim, K. 2012.  \textit{Basic Concepts of Modern Epidemiology}. Lambert Academic Publishing.

\bibitem[Krit et al.(1982)]{Krit82}  Kritchevsky, D., Tepper, S. A., Czarnecki, S. K. and Klurfeld, D. M. 1982.  Atherogenicity of animal and vegetable protein.  Influence of the lysine to arginine ratio.    \textit{Atherosclerosis}. 41, 429-431.

\bibitem[Kritchevsky et al.(1983)]{Krit83} Kritchevsky, D., Tepper, S. A., Czarnecki, S. K., Klurfeld, D. M., and Story, J. A. 1983.  \textit{Current Topics in Nutrition and Disease, Volume 8: Animal and Vegetable Proteins in Lipid Metabolism and Atherosclerosis}, 85-100.

\bibitem[Lu(2009)]{Lu09} Lu, Christine Y. 2009.  Observational studies: a review of study designs, challenges and strategies to reduce confounding. \textit{International journal of clinical practice}, volume 63, number 5, p 691-697. 

\bibitem[Morgan and Winship(2007)]{Morgan07}  Morgan, S.L. and Winship, C. 2007.  \textit{Counterfactuals and Causal Inference, Methods and Principles for Social Research}. Cambridge.

\bibitem[Pearl(2009a)]{Pearl09a}  Pearl, Judea. 2009.  \textit{Causality: models, reasoning and inference}. Cambridge University Press.

\bibitem[Pearl(2009b)]{Pearl09b}  Pearl, Judea.  2009.  Causal inference in statistics: An overview.  \textit{Statistical Surveys}, volume 3, p 96-146.

\bibitem[Rubin(2009)]{Rubin09}  Rubin, D.  2009.  Should observational studies be designed to allow lack of balance in covariate distributions across treatment groups?  \textit{Statistics in Medicine}, 28, 1420-1423.

\bibitem[Schieder and Gould(2016)]{Schieder16} Schieder, Jessica and Gould, Elise. 2016. \textit{Women's work and the gender pay gap}.  Economic Policy Institute report.

\bibitem[Terpstra, Hermus, and West(1983)]{T83}  Terpstra, A. H. M., Hermus, R. J. J. and West, C. E. 1983. \textit{Animal and Vegetable Proteins in Lipid Metabolism and Athersclerosis}.
Alan R. Liss, Inc.

\bibitem[VanderWeele and Shpitser(2011)]{Vand}  VanderWeele, T.J. and Shpitser, I. 2011.  A new criterion for confounder selection.  \textit{Biometrics}, 67, 1406-1413.

\bibitem[Wasserman(2004)]{Wass}  Wasserman, L. (2004). \textit{All of Statistics, A Concise Course in Statistical Inference}.
Springer.

\end{thebibliography}
\end{document}